\documentclass[12pt,a4paper]{article}
\usepackage{amssymb,amsmath,amsthm,enumerate}
\usepackage{graphicx}

\theoremstyle{plain}
\newtheorem{theorem}{Theorem}[section]

\newtheorem{proposition}[theorem]{Proposition}

\theoremstyle{definition}

\newtheorem{claim}[theorem]{Claim}

\newcommand\cref[1]{Corollary~\ref{cor:#1}}


\textheight=8in \textwidth=6.8in \topmargin=0.3in \oddsidemargin=-0.2in
\evensidemargin=0in
\title{Between the deterministic and non-deterministic query complexity}
\author{D\'aniel Gerbner\\
\medskip 
\small Alfr\'ed R\'enyi Institute of Mathematics, Hungarian Academy of Sciences\\
\small P.O.B. 127, Budapest H-1364, Hungary.\\
\medskip
\small \texttt{gerbner@renyi.hu}
}

\begin{document}

\maketitle

\begin{abstract}
    We consider problems that can be solved by asking certain queries. The deterministic query complexity $D(P,n)$ of a problem $P$ is the smallest number of queries needed to ask in order to find the solution with an input of size $n$ (in the worst case), while the non-deterministic query complexity $D_0(P,n)$ is the smallest number of queries needed to ask, in case we know the solution, to prove that it is indeed the solution (in the worst case). Equivalently, $D(P,n)$ is the largest number of queries needed to find the solution in case an Adversary is answering the queries, while $D_0(P,n)$ is the largest number of queries needed to find the solution in case an Adversary chooses the input.
    
    We define a series of quantities between these two values, $D_k(P,n)$ is the largest number of queries needed to find the solution in case an Adversary chooses the input, and answers the queries, but he can change the input at most $k$ times. We give bounds on $D_k(P)$ for various problems $P$.
    
\end{abstract}

\section{Introduction}

In this paper, we consider problems that can be solved by asking certain queries. Somewhat loosely, a problem consists of a function and some kind of description of allowed queries. We are given a function $f:X\rightarrow Y$. $X$ is the set of inputs, and our goal is to determine $f(x)$ for an $x\in X$ that is unknown to us. We can ask queries of certain type in order to do that. We need to construct an algorithm, i.e. a strategy that describes what is the next query after a series of queries and answers. This can be represented by a \textit{decision tree}, a tree where the vertices are potential queries, the first query is the root, the possible answers to a query are represented by edges going out from the corresponding vertex, and at the other end of an edge we have the next query.   

The most typical goal is to find algorithms that minimize the worst case query complexity $D(P,n)$ of a problem $P$, i.e. the maximum number of queries that the algorithm needs to ask to solve the problem on an input of size $n$.

A useful approach is to consider the problem as a game between a player who asks the queries and tries to solve the problem as fast as possible (we will call him \textit{Questioner}), and another player, usually called \textit{Adversary}, who answers the queries and tries to postpone the solution. Rather than choosing an input element $x\in X$, the Adversary can answer arbitrarily. However, his answers have to be consistent with at least one input. The game finishes if Questioner can find the value of $f$, assuming that the answers are consistent with at least one input. 

Assume that we can find an algorithm for the Adversary such that no matter what strategy the Questioner has, at least $m$ queries have to be asked before the game finishes. Then it is easy to see that $D(P,n)\ge m$. Indeed, against any particular strategy of the Questioner, the Adversary's answers are consistent with an input $x$. That means the particular strategy needs to ask at least $m$ queries if $x$ happen to be the input, and this holds for every strategy, thus $D(P,n)\ge m$.

On the other hand, even the best strategy for the Adversary cannot force Questioner to ask more than $D(P,n)$ queries.

\bigskip

Above we were looking for the algorithm that minimized the number of queries for the worst input, i.e. the minimum (among algorithms) of the maximum (among inputs) of the queries needed to solve the problem for that input. One can look instead for the maximum (among inputs) of the minimum (among algorithms) of the queries needed. In other words, previously we evaluated algorithms based on the worst input, and picked the best algorithms, but now we are going to evaluate inputs based on the best algorithm, and then pick the worst input.

A more typical approach to this version is that we can guess the input $x$, but we still have to prove that the value of $f$ is indeed $f(x)$. Our knowledge on $x$ only helps us choose the queries that we ask. It is called the \textit{non-deterministic} version of the problem. In the decision tree, it means that instead of the longest path from a root to a leaf, we are looking for the shortest such path. Its length is called the \textit{non-deterministic query complexity} or \textit{certificate complexity} of the problem, and we denote it by $D_0(P,n)$. Note that non-deterministic complexity is usually defined in a non-symmetric way. In particular, if $f$ has only two possible values 0 or 1, it is possible that for every $x$ with $f(x)=1$, the proof of this fact is short, while for some $y$ with $f(y)=0$, the proof of this fact is long.

If we consider the problem as a game, the change is that Adversary has to pick the input $x$ immediately. Earlier we looked for the number of queries needed if both players play optimally. Now we still have the same assumption, Questioner plays optimally, and the Adversary has to choose the input that gives the largest number of queries. We can look at the optimal play of the Questioner, as if he knows (because he guessed correctly) the input.

\bigskip

One can look at Adversary strategies in the deterministic version the following way. Instead of saying that the Adversary does not have to pick the input, we can say that he has to pick one, but he can \emph{change} it. Of course, when the input is changed, the new one has to be consistent with all the earlier answers.

In this paper we are going to limit the Adversary's ability. We are going to assume that the input can be changed at most $k$ times. We remark that the earlier answers to the queries have to be satisfied, we say that those are \textit{fixed}. We call the set of steps between two changes of the input a \textit{round}.

We are going to look at the game from the Adversary's point of view. If we are interested in the best case for the Adversary (i.e. the worst case for the Questioner), then by the above arguments, $D(P,n)$ queries are needed to finish the game if he can change the input arbitrarily and $D_0(P,n)$ queries are needed to finish the game if he cannot change the input at all.

Hence we are interested in the worst case for the Adversary. He can choose the input and then change it at most $k$ times, how long can he extend the game? We denote this 
problem by $P(k)$ and the 
number of queries needed in this case (with an input of size $n$) by $D_k(P,n)$. We assume Questioner asks the best possible queries all the time. Therefore; we can assume the Questioner actually knows all the time, what the current input is, and his goal is not to find the value of $f$, but to prove that it is that value (i.e. to find a certificate). In particular, if $k=0$, then Adversary does not play after choosing $x$, and Questioner knows $x$. This is indeed the non-deterministic version, thus we do not denote two different things by $D_0(P,n)$. 

On the other hand, if $k\ge D(P,n)-1$, then the Adversary can change the input as much as he wants in the first $D(P,n)-1$ steps, and the best strategy of Questioner ends before any further chance for the Adversary to change the input. Thus we have $D_k(P,n)=D(P,n)$ if $k\ge D(P,n)-1$.

The above observations show that for any $k$, $D_k(P,n)$ is between the non-deterministic and the deterministic complexity of the problem $P$.  We continue with two other simple observations.

\begin{proposition}\label{altalanos}
(i) $D_k(P,n)\le D(P,n)$.

(ii) $D_k(P,n)\le \min\{\sum_{i=1}^l D_{j_i}(P,n): \, l,i_1,\dots,i_l  \text{ are such that } \sum_{i=1}^l (j_i+1)>k\}$.
\end{proposition}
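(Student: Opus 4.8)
The plan is to treat (i) by forgetting the Questioner's extra knowledge and running a deterministic strategy, and to treat (ii) by splitting the Questioner's play into at most $l$ consecutive \emph{phases}, one per index $i$, in which he behaves as if the Adversary were allowed only $j_i$ further changes. For (i), the Questioner runs an optimal deterministic strategy for $P$, ignoring that at each moment he knows the current input. Such a strategy is a decision tree of depth at most $D(P,n)$ along every root--to--leaf path of which the accumulated query--answer pairs determine $f$ uniquely; after at most $D(P,n)$ queries the play reaches a leaf, and since the current input is consistent with all the answers received, its $f$-value is the one forced at that leaf. Thus the answers already form a certificate and the game is over, so $D_k(P,n)\le D(P,n)$.

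For (ii), fix $l$ and $j_1,\dots,j_l$ with $\sum_{i=1}^l(j_i+1)>k$; it suffices to show $D_k(P,n)\le\sum_{i=1}^l D_{j_i}(P,n)$. The first thing I would isolate is a \emph{monotonicity under conditioning}: for any set $A$ of query--answer pairs that is consistent with at least one input, writing $P|_A$ for the problem of determining $f$ on the inputs compatible with $A$, one has $D_j(P|_A,n)\le D_j(P,n)$ for every $j$. To prove it, take a Questioner strategy $S$ witnessing $D_j(P,n)$ (so $S$ forces a certificate within $D_j(P,n)$ queries whenever the Adversary changes the input at most $j$ times) and run $S$ on $P|_A$: each time $S$ asks a query occurring in $A$, feed $S$ the recorded answer at no cost (the current input is compatible with $A$, so that answer is forced), and otherwise ask the query for real. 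Any resulting run is a legal run of $S$ in the game on $P$ against an Adversary that keeps all of its inputs compatible with $A$ and changes at most $j$ times, so $S$ outputs a certificate after at most $D_j(P,n)$ of its own queries, hence after at most that many genuinely new ones; that certificate, together with $A$, certifies $f$ on the inputs compatible with $A$.

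Now run the Questioner in phases. Set $A_1=\emptyset$, let $A_i$ be the set of all query--answer pairs gathered in phases $1,\dots,i-1$, and in phase $i$ let the Questioner play a strategy witnessing $D_{j_i}(P|_{A_i},n)$, stopping that phase after at most $D_{j_i}(P|_{A_i},n)\le D_{j_i}(P,n)$ queries. Throughout phase $i$ the current input stays compatible with $A_i$, so phase $i$ is a legal play of the game on $P|_{A_i}$; hence if the Adversary changes the input at most $j_i$ times during phase $i$, the phase ends with a certificate for $P|_{A_i}$, which together with $A_i$ certifies the answer in the original game, and the Questioner has won after at most $\sum_{i'=1}^{l}D_{j_{i'}}(P,n)$ queries. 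Otherwise the Adversary used at least $j_i+1$ changes in phase $i$, and we proceed to phase $i+1$. If no phase ever produced a certificate, the Adversary would have used at least $\sum_{i=1}^l(j_i+1)>k$ changes in total, which the rules forbid; so some phase ends the game, giving $D_k(P,n)\le\sum_{i=1}^l D_{j_i}(P,n)$, and minimizing over all admissible $(j_i)$ finishes (ii).

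The step I expect to need the most care is the conditioning lemma combined with the charging of the Adversary's single global budget of $k$ changes to the individual phases: the bookkeeping works only because in phase $i$ the Questioner re-measures against the conditioned problem $P|_{A_i}$, so that the (possibly many) changes forced in earlier phases are not held against his phase-$i$ strategy, while each phase $i$ that does not end the game is proved to consume at least $j_i+1$ of the Adversary's changes.
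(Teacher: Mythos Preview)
Your proof is correct and follows the same phase-by-phase skeleton as the paper: run an optimal $P(j_1)$ strategy, then an optimal $P(j_2)$ strategy, and so on, arguing that if no phase terminates the game then each phase consumed at least $j_i+1$ changes, exceeding the budget $k$.

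The one notable difference is that you introduce a conditioning lemma $D_j(P|_A,n)\le D_j(P,n)$ so that in phase $i$ the Questioner plays optimally on the restricted problem $P|_{A_i}$ and the resulting certificate is the phase-$i$ answers \emph{together with} $A_i$. The paper sidesteps this entirely: in each phase it simply restarts the optimal $P(j_i)$ strategy from scratch, ignoring all earlier answers. Since that strategy is designed to produce, on its own, a certificate for $P$ within $D_{j_i}(P,n)$ queries against any Adversary making at most $j_i$ changes, and since constraining the Adversary's inputs to be compatible with the phase-$1,\dots,i{-}1$ answers only helps the Questioner, the phase-$i$ answers alone already suffice to finish the game if the Adversary stayed within $j_i$ changes. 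So your conditioning lemma is sound but unnecessary for the bound; it would matter only if one cared about not re-asking queries or about sharpening the inequality, neither of which is at stake here.
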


\begin{proof}
$D(P,n)$ is an upper bound by definition. To show (ii), the Questioner runs the optimal algorithm for $P(j_1)$. If it does not solve the problem, the Adversary had to change the input more than $j_1$ times. Then the Questioner runs the optimal algorithm for $P(j_2)$, and so on till $P(j_l)$. If the problem is never solved, in each part the Adversary had to change the input at least $j_i+1$ times, thus altogether more than $k$ times, a contradiction.
\end{proof}

\begin{proposition}\label{also}
$D_k(P,n)\ge \min \{k+1,D(P,n)\}$.
\end{proposition}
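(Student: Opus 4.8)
The plan is to separate the two regimes $D(P,n)\le k+1$ and $D(P,n)\ge k+2$. If $D(P,n)\le k+1$ then $k\ge D(P,n)-1$, so by the observation made just before Proposition~\ref{altalanos} we already have $D_k(P,n)=D(P,n)=\min\{k+1,D(P,n)\}$, and we are done. Thus the whole content of the statement is to show $D_k(P,n)\ge k+1$ whenever $D(P,n)\ge k+2$. For this I would exhibit a strategy for the Adversary in $P(k)$ and argue that it keeps the game alive for at least $k+1$ queries no matter how the Questioner plays.

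The strategy is to \emph{shadow} an optimal Adversary strategy $\cA$ for the ordinary deterministic game, that is, one forcing $D(P,n)$ queries against every Questioner (such a strategy exists because the best Adversary strategy forces exactly $D(P,n)$ queries, as explained in the introduction). In $P(k)$ the Adversary always keeps some committed input consistent with all answers given so far. When the Questioner asks a query $q$, the Adversary compares the answer his current committed input would give to $q$ with the answer $\cA$ prescribes for the history produced so far: if they coincide he answers with it and keeps the committed input; if they differ he first switches the committed input to one that is consistent with all previously fixed answers \emph{and} gives $\cA$'s answer to $q$ -- such an input exists because $\cA$'s answers are, by definition, consistent with at least one input -- spending one of his $k$ changes, and then answers. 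He uses this rule for the first $k$ queries. Each query triggers at most one change, so after $k$ queries at most $k$ changes have been spent, exactly the budget, and the answers given so far form a play of $\cA$. Since $\cA$ forces $D(P,n)\ge k+2$ queries, the answers to the first $k$ queries cannot yet be a certificate for the value of $f$ (otherwise $\cA$ would have forced only at most $k$ queries), so the game is not over and the Questioner must ask a $(k+1)$-st query. Hence $D_k(P,n)\ge k+1=\min\{k+1,D(P,n)\}$, which together with the first regime gives the claim. (The Questioner's ability to ``see'' the current committed input is irrelevant here: the answers produced are $\cA$'s answers for whatever query sequence the Questioner chooses, and $\cA$ copes with every query sequence.)

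The point that needs care is that shadowing is genuinely a legal Adversary strategy in $P(k)$ costing at most one change per query: one must check that whenever $\cA$'s answer differs from the committed input's answer there really is an input realising $\cA$'s answer while still being consistent with every earlier fixed answer, and that the initial committed input together with these at-most-$k$ switches respects the rule that a changed input agree with all earlier answers. Both reduce to the fact that at every moment the list of answers given so far is a prefix of a play of $\cA$, hence extendable by $\cA$'s next answer to a list realised by some input. Once this is stated precisely, and one fixes the convention for when, relative to a query, a change of input is allowed (so that the count of $k$ changes for $k$ queries is exactly right), the remainder is bookkeeping.
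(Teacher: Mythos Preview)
Your proof is correct and follows essentially the same approach as the paper's: both pick an optimal Adversary strategy $\cA$ for the unrestricted deterministic game and have the $P(k)$-Adversary imitate it, observing that this costs at most one input-change per query and hence can be sustained through the first $k$ queries, after which the game cannot yet be over if $D(P,n)>k$. The paper's proof is simply terser; your version spells out the ``shadowing'' mechanism (keep a committed input, switch only when $\cA$'s prescribed answer disagrees with it), verifies that such a switch is always to a genuinely consistent input, and explicitly notes that the Questioner's knowledge of the committed input does not help him because $\cA$ already copes with \emph{every} query sequence. These are exactly the points the paper is taking for granted, so there is no real divergence in method.
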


\begin{proof} Let Strategy $A$ of the Adversary be one that forces at least $D(P,n)$ questions in case he can change the input as many times as he wants.
Let us assume first $k+1\ge D(P,n)$. Then the Adversary has the freedom to change the input in the first $D(P,n)-1$ steps, thus he can follow Strategy $A$. If $k+1<D(P,n)$, then the Adversary follows the same strategy. He can lose only when he cannot change the input anymore, thus no earlier than after the $k+1$st query.
\end{proof}

\subsection*{Motivation}

We have multiple different ways to look at the new question initiated here, and they lead to different motivations and different terminology.

Our main motivation is to understand the difference between deterministic and non-deterministic query complexity more, by examining what happens in the middle ground.

If we consider the problem as a game, we can look at the non-deterministic version, where the Adversary has to pick an input. If we look at the deterministic version afterwards, we can consider Adversary's strategy of changing the input as cheating. For example, in the well-known game of Battleship, a player can rearrange his ships without the opponent noticing. This is useful (in increasing the number of guesses needed to sink all the ships from $D_0(P,n)$ to $D(P,n)$), but it is obviously cheating. In this setting, we study the limitations of cheating. Note that in this setting the Questioner also cheats by peeking.


In a sense, we also study how robust our algorithms are. When we look at the optimal algorithms, not just the values of $D_k(P,n)$, we can see how much we have to change them if the input changes.

Let us mention that it is not unimaginable that we meet our problem when examining actual algorithms. It is not unusual that we have a strong assumption on the input. Yet a query can have an unexpected answer. That results in a paradigm shift, a complete reevaluation of the input, which leads to a new, equally strong assumption. Assuming there are at most $k$ paradigm shifts is very unnatural, but assuming there are not too many shifts makes sense.

Finally, we can look at our problems from a different angle. Instead of having a fixed input, we are given a dynamic, ever-changing system. When we ask a query, rather than getting an existing (but unknown to us) information about the input, we \textit{fix} that part of the input. We do have some knowledge about the parts not yet fixed, but it is unreliable. Still, we may be able to assume that our knowledge is rarely wrong.

\subsection*{Structure of the paper}

In the next sections, we study $D_k(P,n)$ for different problems $P$. We consider search theory and group testing in Section 2, sorting in Section 3, finding the largest and the smallest element in a set with an unknown order in Section 4, and the property that a graph is connected in Section 5. We finish the paper with concluding remarks, where we describe a couple variants.

\section{Search theory and group testing}\label{searchgt}

In search theory, $f=id$ is the identity function, i.e. we need to identify an unknown element $x$ from an underlying set of size $n$ (the set of possible inputs). Usually $x$ is called the \textit{defective} element, that causes failures in the whole system and our goal is to locate the defective part. The allowed queries are all the possible YES/NO questions regarding the input. More precisely, we can identify every query with the set of elements where the answer is YES, thus we ask subsets of the underlying set. In particular, the Questioner can ask the set $\{x\}$, showing $D_0(P,n)=1$. 

On the other hand, it is well-known that $D(P,n)=\lceil \log n\rceil$. 
Thus Proposition \ref{altalanos} and Proposition \ref{also} (with $j_1=\dots=j_{k+1}=0$) imply $D_k(P,n)=\min\{k+1,\lceil \log n\rceil\}$.


A natural generalization is to consider the case there are more than one defectives. This version is usually called group testing. For more on this area, see \cite{dw}. In this case the input is a set of elements. Usually we assume that there are exactly $d$ or at most $d$ defectives and the goal is to identify all of them. Let us denote these problems by $P_d$ and $P_{\le d}$. It is well-known that we have $\Omega(d\log (n/ d))=D(P_d,n)\le D(P_{\le d},n)=O(d\log n)$ (see for example \cite{dw}).

Observe first that $D_0(P_d,n)=1$ by asking the complement of the set of defectives, but $D_0(P_{\le d},n)=d$. Indeed, it is enough to ask the sets $\{a\}$ for all the defective elements $a$. On the other hand, for every defective element $a$ we need to ask a set that contains only $a$ among the defective elements, otherwise $a$ could be non-defective. 
For general $k$ we have $D_k(P_d,n)=\min \{k+1,D(P_d,n)\}$ using Proposition \ref{altalanos} and Proposition \ref{also} with $j_1=\dots=j_{k+1}=0$.

The situation, however, is much more complicated for $P_{\le d}$. The same way, using Propositions \ref{altalanos} and \ref{also}, we obtain for $D_k(P_{\le d},n)$ the lower bound $\min \{k+1,D(P_{\le d},n)\}$ and the upper bound $\min \{D(P_{\le d},n)\}$. 
If $n$ is small, this difference disappears, as $D(P_{\le d})$ gives the minimum in both bounds. If $n$ is large we can also determine the exact value of $D_k(P_{\le d},n)$.

\begin{proposition} Let $n> (d-1)2^k$. Then $D_k(P_{\le d},n)=k+d$.

\end{proposition}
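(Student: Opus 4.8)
The bound $D_k(P_{\le d},n)\le k+d$ is already available (it was obtained above from Propositions~\ref{altalanos} and~\ref{also}), so my plan is to prove only the matching lower bound $D_k(P_{\le d},n)\ge k+d$, i.e.\ to exhibit an Adversary strategy that changes the input at most $k$ times yet forces at least $k+d$ queries.

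I would set aside $d-1$ fixed elements $a_1,\dots,a_{d-1}$ of the underlying set, let $R$ be the set of the remaining $n-(d-1)$ elements (so the number of search alternatives $n-d+2$ satisfies $n-d+2>2^k$, using $n>(d-1)2^k$), and let the Adversary start from the input with defective set $\{a_1,\dots,a_{d-1}\}$. He answers a query $Q$ as follows. If $Q$ meets $\{a_1,\dots,a_{d-1}\}$ he answers YES; this is consistent with every input that keeps $a_1,\dots,a_{d-1}$ defective, so it costs no change and reveals nothing about $R$. If $Q\subseteq R$ he regards the query as a move in the auxiliary \emph{search} problem of locating a possible ``$d$-th defective'' among the $n-d+2$ alternatives ``the $d$-th defective is $z$'' ($z\in R$) and ``there is no $d$-th defective'', and answers according to an optimal Adversary strategy for that search problem; a change in this sub-game only relocates, creates or deletes that $d$-th defective, so it is a legal change of the whole input. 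Since $n-d+2>2^k$, the already-established identity $D_k(\cdot,m)=\min\{k+1,\ceil{\log m}\}$ for search shows this sub-strategy forces at least $k+1$ such queries while using at most $k$ changes.

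For the count I would argue that in any certificate the Questioner must, for each $i$, rule out $S^*\setminus\{a_i\}$, where $S^*$ is the true defective set (which is $\{a_1,\dots,a_{d-1}\}$, possibly together with one element of $R$). A NO answer can never exclude $S^*\setminus\{a_i\}$, so some query $Q_i$ answered YES satisfies $Q_i\cap S^*=\{a_i\}$; the $Q_i$ are pairwise distinct and each meets $\{a_1,\dots,a_{d-1}\}$, hence none of them is one of the queries contained in $R$ used for the search sub-problem. Meanwhile the certificate must also resolve the auxiliary search problem, for which the sub-strategy has forced at least $k+1$ queries contained in $R$, and all $k$ permitted changes remained available to that sub-strategy because the queries meeting $\{a_1,\dots,a_{d-1}\}$ cost nothing. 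The two families of queries being disjoint, the certificate contains at least $(d-1)+(k+1)=k+d$ queries, so $D_k(P_{\le d},n)\ge k+d$.

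The delicate part will be the additivity of the two lower bounds: I must check that a single query cannot both help certify one of $a_1,\dots,a_{d-1}$ and advance the localization of the $d$-th defective — this is precisely why every query meeting $\{a_1,\dots,a_{d-1}\}$ may be answered YES with no gain on $R$, and why the ``mixed'' queries, costing the Adversary no change, leave all $k$ changes for the search sub-strategy. Verifying that this sub-strategy really keeps at least two candidate alternatives alive through its first $k$ queries is the place where $n-d+2>2^k$ enters, and handling it cleanly — together with checking that a change to the $d$-th defective is always a legal change of the entire input (consistency with the earlier YES answers to queries meeting $\{a_1,\dots,a_{d-1}\}$) — is where I expect the real work to be.
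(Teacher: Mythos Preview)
Your approach is correct (for $d\ge 2$) but genuinely different from the paper's. The paper's Adversary does not pin down any defectives in advance: for each of the first $k$ queries $A$ he simply answers YES if and only if $|A|\ge n/2$, and in either case volunteers the extra information that all defectives lie in the larger half. This halves the candidate set at every step, so after $k$ queries (and at most $k$ changes) there remain at least $n/2^k>d-1$, hence at least $d$, possible locations; the lower bound $D_0(P_{\le d},m)\ge d$ then forces $d$ further queries. The whole argument is a few lines and self-contained.

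Your route --- freezing $d-1$ defectives, reducing the residual uncertainty to a single-element search among $n-d+2$ alternatives, and invoking the Section~\ref{searchgt} identity for $D_k$ of search as a black box --- is more modular and, via the disjointness of the queries $Q_i$ from the queries contained in $R$, makes the ``no double counting'' step explicit (the paper is terse about why the final $d$ queries are distinct from the first $k$). Your argument in fact works under the milder hypothesis $n-d+2>2^k$; note, however, that the implication you assert from $n>(d-1)2^k$ to $n-d+2>2^k$ amounts to $(d-2)(2^k-1)\ge 0$ and therefore requires $d\ge 2$, which you should state.
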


\begin{proof} For the upper bound, observe Questioner can always ask singletons that are defective at that point. At every point there are at most $d$ YES and at most $k$ NO answers to such queries, thus the algorithm ends after at most $k+d$ queries.

For the lower bound, consider the following simple algorithm of the Adversary. For the first query $A$, he answers YES if and only if $|A|\ge n/2$. Moreover, he gives the additional information that all the defective elements are in $A$ in this case. Thus after the answer all we know is a set of size at least $n/2$ that contains all the defectives. He answers the same way for the first $k$ queries with respect to the number of remaining elements. After that there is a set of size at least $n/2^k>d-1$ that contains all the defectives. Even if the Adversary used up all the possible changes of the input, the Questioner needs at least $d$ more queries to identify all the defectives.

\end{proof}

\section{Sorting}

In the case of sorting, the input is a set of $n$ different numbers $a_1,\dots,a_n$, and a query corresponds to two numbers $a_i$ and $a_j$ with answer YES if and only if $a_i<a_j$. The goal is to sort the elements, i.e. find the increasing order of them. One can look at it as a special version of the search theory problem studied in Section \ref{searchgt}, as we are identifying one of many possible orderings. However, instead of every possible YES/NO question, we can only ask some special ones. For more on sorting, see for example \cite{knuth}.

Let $S$ denote the sorting problem, then it is well-known that $D(S,n)=\Theta(n\log n)$. On the other hand, $D_0(S,n)=n-1$. Indeed, one can compare the largest and the second largest element, then the second largest and the third largest, and so on. On the other hand,
the queried pairs as edges form a graph, and if that graph is unconnected, we cannot know how their vertices relate to each other.

An important special property of this problem is that those $n-1$ edges are always needed to solve the problem, as if a pair of numbers that are adjacent in the sorted order is not asked as a query, we have no way to show which one is larger. The other $\binom{n-1}{2}$ possible queries are ultimately useless. Hence the Adversary's goal with the changes is to make sure that not too many of these pairs of adjacent numbers are among the queries already asked. 

\begin{proposition}\label{sorti}
$D_1(S,n)=\lceil 3n/2\rceil -2$.
\end{proposition}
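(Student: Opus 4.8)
The plan is to prove the two inequalities $D_1(S,n)\le\lceil 3n/2\rceil-2$ and $D_1(S,n)\ge\lceil 3n/2\rceil-2$ separately.

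\emph{Upper bound.} I would have the Questioner play as follows: knowing the current order $x_1<x_2<\dots<x_n$, he asks the comparisons $\{x_1,x_2\},\{x_2,x_3\},\{x_3,x_4\},\dots$ in this order. If the Adversary never changes the input, then after $n-1$ queries the asked pairs are exactly the $n-1$ consecutive pairs of the order, a certificate, and $n-1\le\lceil 3n/2\rceil-2$. Otherwise the Adversary changes the input right after the $j$-th query for some $0\le j\le n-2$; at that moment the asked set $\{\{x_1,x_2\},\dots,\{x_j,x_{j+1}\}\}$ only forces $x_1,\dots,x_{j+1}$ to keep this relative order in the new order, the remaining $n-j-1$ elements being unconstrained. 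A pair $\{x_i,x_{i+1}\}$ of the asked set is still a consecutive pair of the new order unless one of those $n-j-1$ free elements is inserted between $x_i$ and $x_{i+1}$, so at least $\max\{0,2j-n+1\}$ of the $j$ asked pairs remain useful. Knowing the new order, the Questioner now asks its missing consecutive pairs and finishes (the single allowed change has been used), which costs at most $j+(n-1)-\max\{0,2j-n+1\}$ queries in all; a short computation shows this quantity is maximized over $0\le j\le n-2$ at $j$ around $n/2$, with value exactly $\lceil 3n/2\rceil-2$.

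\emph{Lower bound.} The crucial step I would isolate is the following claim: if $Q$ is a set of at most $\lceil n/2\rceil-1$ comparisons recorded as answers that are consistent with some order, then there is an order $\sigma$ consistent with all answers of $Q$ in which no pair of $Q$ is a consecutive pair. To prove it, regard $Q$ as a graph on the $n$ elements; a connected component on $k$ elements uses at least $k-1$ edges of $Q$ and $Q$ has at most $\lceil n/2\rceil-1$ edges, so every component spans at most $\lceil n/2\rceil$ elements. Fix a linear extension of $Q$ restricted to each component; these are sequences of total length $n$, the longest of length at most $\lceil n/2\rceil$. By the standard fact that a sequence of symbols can be permuted so that no two equal symbols are adjacent precisely when no symbol occurs more than $\lceil(\text{length})/2\rceil$ times, these sequences can be interleaved into a single order $\sigma$ of all $n$ elements in which no two elements of the same component are adjacent. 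Since every edge of $Q$ lies inside a component, no pair of $Q$ is consecutive in $\sigma$; and $\sigma$ is consistent with $Q$ because within each component the elements appear in a valid order and there are no edges between components. Given the claim, the Adversary fixes the initial order $1<2<\dots<n$ and answers the first $t:=\lceil n/2\rceil-1$ queries accordingly; this is legitimate since every certificate consists of $n-1$ pairs and $n-1>t$, so the game is not yet over. Letting $Q^\ast$ be the set of these $t$ queries, the Adversary changes the input to the order $\sigma$ obtained from the claim applied to $Q^\ast$. The Questioner must still produce a certificate for $\sigma$, i.e.\ ask all $n-1$ consecutive pairs of $\sigma$, none of which lies in $Q^\ast$; hence he makes $n-1$ further queries, for a total of at least $t+(n-1)=(\lceil n/2\rceil-1)+(n-1)=\lceil 3n/2\rceil-2$. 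Combined with the upper bound this gives $D_1(S,n)=\lceil 3n/2\rceil-2$.

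\emph{Main obstacle.} The upper-bound bookkeeping is routine once one understands the effect of a single change. The real work is the lower bound, and within it the combinatorial claim: one has to observe that $\lceil n/2\rceil-1$ pre-asked comparisons force every connected component of the comparison graph to have at most $\lceil n/2\rceil$ vertices — exactly the threshold at which the components can be shuffled into a single order so that not one of the already-asked comparisons survives as a consecutive pair.
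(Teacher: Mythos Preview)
Your proof is correct and follows the same overall architecture as the paper: for the upper bound, have the Questioner walk up the consecutive pairs of the current order and count how many of the $j$ already-asked pairs survive as consecutive pairs after the change; for the lower bound, have the Adversary wait $\lceil n/2\rceil-1$ queries and then switch to an order in which none of the asked pairs is consecutive, forcing $n-1$ fresh queries.

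The one genuine difference is how you establish the existence of that new order. The paper gives an explicit interleaving: it totally orders the largest component $a_1<\dots<a_k$, merges the smaller components into a single chain, and threads them between the $a_i$'s by hand. You instead observe that $\lceil n/2\rceil-1$ edges force every component to have at most $\lceil n/2\rceil$ vertices, and then invoke the standard rearrangement lemma (a multiset of $n$ symbols can be arranged with no two equal symbols adjacent iff no symbol occurs more than $\lceil n/2\rceil$ times), filling each colour class with the linear extension of the corresponding component. This is cleaner and more transparent than the paper's ad~hoc construction; it also makes the role of the threshold $\lceil n/2\rceil-1$ completely explicit. Your upper-bound bookkeeping via $\max\{0,2j-n+1\}$ surviving pairs is likewise tidier than the paper's interval count (and in fact sidesteps a small off-by-one slip in the paper's phrasing of ``$i-1$ intervals'').
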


\begin{proof} Let us start with the strategy of the Adversary. He answers the first $\lceil n/2\rceil -1$ queries without change. Then he changes the input such a way that none of the queried pairs have that the two vertices are adjacent in this new ordering, thus $n-1$ further queries are needed.

We need to show that the elements can be reordered in the above described way. The queried edges form a graph, with a partial ordering on every connected component. The Adversary extends the partial ordering to an arbitrary total ordering on each connected component. Let $a_1<\dots<a_k$ be a largest component, and consider the sets of components with less than $k$ elements. Pick one with the largest number of elements $\ell\le k$. The Adversary extends the ordering on those components to a total ordering $b_1<\dots<b_\ell$, and let $a_i<b_i<a_{i+1}$ for every $i\le \ell$. For every other component with ordering $c_1<\dots<c_m$, we have $m+\ell\ge k$. The Adversary lets $a_{k-i}<c_i<a_{k-i+1}$ for $i\le m$. Finally, he extends the ordering give above to a total ordering and gives this as the new input.

For $a_i$ and $a_j$ with $i<j$, we have either $b_i$ or $c_{k-i}$ between them in the new ordering. For every pair of vertices from another component, an $a_i$ is between them. This shows that for any of the first $\lceil n/2\rceil -1$ queries, the two elements are not adjacent in the new input, thus $n-1$ further queries are needed.


Let us continue with the strategy of the Questioner. He tries to follow his strategy for the non-deterministic version of the problem, described above: in query $i$ he compares the $i$th and $i+1$st smallest element, until Adversary executes the change. Assume it happens after the $i$th answer, then we have a total ordering of $i+1$ elements $b_1<\dots<b_{i+1}$. After the change, for each other element we know its final place in the ordering, in particular we know how many elements are between $b_j$ and $b_{j+1}$. If there are $p$ elements between $b_j$ and $b_{j+1}$, then $p+1$ queries are enough to prove that their final places are what we suspect, and if $p=0$, then 0 queries are enough. If there are $p$ elements smaller than $b_1$ (or larger than $b_{i+1}$), then $p$ queries are enough for the same. 

This shows that we can find the place of the remaining $n-i-1$ elements with one query for each plus some extra queries: for every $j<i$ we need one extra query for those elements between $b_j$ and $b_{j+1}$ in the final ordering, unless there is no element between $b_j$ and $b_{j+1}$. On the one hand, at most $i-1$ extra queries are needed (thus $n-2$ altogether), because there are $i-1$ intervals where we might need extra queries. On the other hand, at most $n-i-1$ extra queries are needed, as we need at least one element between $b_j$ and $b_{j+1}$ for an extra query. Thus $D_1(S,n)\le \min\{i+n-2,i+2(n-i-1)\}$, which easily implies the statement.
\end{proof}



Let us briefly discuss $D_k(S,n)$. A reasonable algorithm for the Questioner is the straightforward generalization of the above. He looks at the current input $a_1<\dots<a_n$ and asks $a_i,a_{i+1}$ for the smallest $i$ this query has not been asked. 

Against this strategy the Adversary can do the following. He changes after $\ell_1$ queries, and places the elements not queried yet in a balanced way between the elements appearing in earlier queries. Then the $i$th change is applied after $\ell_i$ queries, and the last change is at the last point where he can change the input so that none of the pairs queried earlier are adjacent in the new input, thus $n$ further queries are needed afterwards. Consider an arbitrary point during the algorithm. Let $b_1<\dots<b_p$ be the elements asked by the Questioner in the current round, and $c_1<\dots,c_q$ be the elements asked earlier.

Then the ordering that is fixed is (with a small error) $b_1\dots<b_p<c_1\dots<c_q$, such that in the current ordering the $b_i$s are the $p$ smallest element, and the $c_i$s are above them, divided in a balanced way. The last change has to come when $p+q$ is about $n/2$, thus the extra queries are won in earlier rounds. However, an exact analysis of this would be rather complicated, because of the errors mentioned earlier. Indeed, at a given change the earlier queries gave the ordering $a_1<\dots<a_r$, and whenever later we ask queries containing elements that are between $a_i$ and $a_{i+1}$ in the current ordering, the largest becomes $b_p$, while $c_1=a_{i+1}$. However, we do not necessarily know if the relation $b_p<c_1$ holds. This gives a small error, that might add up during the rounds.

\section{Finding the maximum and the minimum}

A simpler task than the one considered in the previous section is, having the same input and queries, to look for only some special elements. Finding the largest element takes $n-1$ queries both in the deterministic and the non-deterministic way. Indeed, every other element has to be the smaller in an answer, thus $n-1$ queries are needed, and it can be easily achieved by not asking any element that has no chance to be the largest, since it was proved to be smaller than another element earlier. This implies that with any number of changes, $n-1$ queries are needed.

Obviously, the same holds finding the smallest element. Let $L$ denote the problem of finding the largest and the smallest element at the same time. Pohl \cite{pohl} proved that $D(L,n)=\lceil 3n/2\rceil -2$. It is obvious that $D_0(L,n)=n-1$. The lower bound comes from the fact that it is obviously not easier than finding only the largest element, while the upper bound is shown by the queries of the form $a_i,a_{i+1}$, where $a_i$ is the $i$th largest element. Observe that these queries are the only certificate of size $n-1$. Indeed, every element but $a_1$ should be the smaller, and every element but $a_n$ should be the larger in a query. These are $2n-2$ fixed positions, thus no element can be larger (or smaller) in more than one query if we have only $n-1$ queries. Obviously $a_2$ is only smaller than $a_1$, thus the query $a_1,a_2$ has to be present. Then $a_3,a_1$ cannot be asked, thus $a_3,a_2$ can be the only query where $a_3$ is the smaller, hence it has to be asked, and so on.

\begin{theorem}
$D_k(L,n)= \min\{n+k-1,\lceil 3n/2\rceil -2\}$.
\end{theorem}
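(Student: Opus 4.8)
The plan is to prove the two bounds separately. For the upper bound, note that $D(L,n)=\lceil 3n/2\rceil-2$ already bounds $D_k(L,n)$ by Proposition~\ref{altalanos}(i), so it suffices to show $D_k(L,n)\le n+k-1$. The natural strategy for the Questioner generalizes the one from Proposition~\ref{sorti}: after each change of the input he knows the current total order $a_1<\dots<a_n$, and he asks queries of the form $a_i,a_{i+1}$ for the smallest index $i$ such that this pair has not yet been asked (reusing earlier queries whenever they happen to be of this ``consecutive'' shape). The key accounting point is the one already used for $k=1$: if after a change the queries asked so far determine a chain on some $i+1$ elements that sit in positions forming $r$ ``gaps'' (maximal runs of not-yet-pinned consecutive elements, counting the two ends), then finishing requires only one query per remaining element plus at most $r-1$ extra queries. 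I would argue that across all $k+1$ rounds the total number of ``extra'' queries is at most $k$: each change can create at most one genuinely new gap that was not already being charged, because the elements pinned down in one round form a single consecutive block in the order used in the next round (this is exactly the $b_p<c_1$-type bookkeeping flagged in the Sorting section, but here, for the max/min problem, it is cleaner because we only need the relative order of consecutive elements, not a full sort). Hence the total is at most $(n-1)+k$ queries, giving the upper bound.

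For the lower bound I would show $D_k(L,n)\ge \min\{n+k-1,\lceil 3n/2\rceil-2\}$. The base cases are already in the excerpt: $D_0(L,n)=n-1$ and $D_k(L,n)=D(L,n)=\lceil 3n/2\rceil-2$ once $k\ge D(L,n)-1$. For intermediate $k$, the Adversary combines Pohl's deterministic lower-bound strategy with a ``destroy the certificate'' move. Recall the remark made just before the theorem: the \emph{only} certificate of size $n-1$ consists precisely of the $n-1$ pairs $a_i,a_{i+1}$ where $a_i$ is the $i$th largest element. So the Adversary's plan is: answer the first few queries honestly using Pohl's adversary strategy (which forces the Questioner, if he wants to finish in $n-1$ queries, to be building up exactly this consecutive chain); then, at a carefully chosen moment, change the input so that the pairs asked so far are no longer of this consecutive form --- concretely, so that as few as possible of the asked pairs are ``consecutive'' in the new order. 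By the reordering technique from the proof of Proposition~\ref{sorti} (separating connected components and interleaving smaller components between the elements of a largest one), a single change suffices to make \emph{no} asked pair consecutive, whence $n-1$ fresh queries are needed afterwards. The point is to choose the change time $t$ so that $t + (n-1)$ is as large as possible subject to the constraint that the change is still ``legal'' (the Questioner has not yet solved the problem, i.e. has not yet pinned down max and min); Pohl's bound guarantees the Questioner cannot have finished before roughly $n/2$ queries, so the Adversary can afford to wait until $t\approx n/2$. More generally, with $k$ changes available the Adversary spends $k-1$ of them early to slowly ``waste'' queries and one final change to force the full $n-1$; balancing gives the $n+k-1$ term, and when $k$ is large enough this exceeds $\lceil 3n/2\rceil-2$, at which point the plain deterministic bound takes over --- hence the minimum.

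I would present the argument by first nailing the two extreme values, then doing the Questioner's strategy (upper bound) in full, and finally the Adversary's strategy (lower bound). The main obstacle I expect is the bookkeeping in both directions around the ``$b_p<c_1$'' issue that the author explicitly warns about in the Sorting section: in the upper bound I need to be sure that a change in round $i$ really contributes at most one new ``extra'' query and that these errors do not compound; in the lower bound I need the Adversary's intermediate changes to each genuinely cost the Questioner a query while keeping all previously forced positions intact. The saving grace, and the reason the clean formula $\min\{n+k-1,\lceil 3n/2\rceil-2\}$ holds here whereas the sorting analogue was left as a mess, is that for finding the max and min the certificate structure is rigid (unique certificate of minimum size) and we only ever care about the $n-1$ consecutive comparisons, so the ``errors'' that plague the general sorting analysis are controllable; making that precise is the crux of the proof.
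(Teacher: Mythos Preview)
Your lower-bound plan has a real gap. You borrow the ``make no asked pair consecutive'' trick from the sorting proof and conclude that \emph{$n-1$ fresh queries are needed afterwards}. That inference is valid for sorting, where a certificate must contain every consecutive pair, but it fails for the max/min problem $L$. In $L$ a query never becomes useless: whatever ordering the Adversary switches to, each past query still certifies that its smaller element is not the maximum and its larger element is not the minimum. Concretely, after $t$ queries along a chain the Questioner already has $t$ elements known not-max and $t$ elements known not-min; a simple double count shows that at most $n-1-t$ further queries can be forced by the ``$|S|\ge n-1$ and $|L|\ge n-1$'' requirement alone, giving only $n-1$ in total, not $n+k-1$. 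So neither the single final change nor the ``spend $k-1$ changes to waste queries'' variant yields the bound as stated. The paper's Adversary does something sharper: after each answer it reorders so that the set $B$ of elements that have only been smaller sits \emph{below} everything and the set $C$ of elements that have only been larger sits \emph{above} everything. Then, once changes run out, every element of $B$ is still a live candidate for the minimum and every element of $C$ for the maximum, so $|B|-1$ queries making $B$-elements larger and $|C|-1$ queries making $C$-elements smaller are still required, and these are pairwise distinct because $B$ lies entirely below $C$. The careful placement $B<\{A,D\}<C$, not mere non-consecutiveness, is what makes the earlier answers worthless for the remaining task.

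Your upper-bound sketch heads the right way but is more tangled than necessary, and the ``each change creates at most one new gap'' claim is exactly the compounding-error issue you yourself flag. The paper avoids all of this with a one-line induction that exploits a feature specific to $L$: after the first $j$ chain queries $a_1a_2,\dots,a_ja_{j+1}$, the elements $a_2,\dots,a_j$ have each been both smaller and larger, so they are provably neither max nor min and can be discarded; the problem reduces to $L$ on $n-j+1$ elements with $k-1$ changes left, giving $j+D_{k-1}(L,n-j+1)=j+(n-j+1)+(k-1)-1=n+k-1$. That reduction-to-a-smaller-instance is the clean replacement for your gap bookkeeping.
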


\begin{proof} First we prove $D_k(L,n)\le n+k-1$. Observe that this, together with $D(L,n)=\lceil 3n/2\rceil -2$ proves $D_k(L,n)\le \min\{n+k-1,\lceil 3n/2\rceil -2\}$. We use induction on $k$, the base case $k=0$ was described above. In the first round, the Questioner asks $a_i,a_{i+1}$ as the $i$th query, where $a_i$ is the $i$th largest element. Assume there is a change after the $j$th query. Then we know $a_2,\dots,a_j$ cannot be the largest, nor the smallest elements. Thus we need to find the largest and smallest element among the other ones, hence $D_{k-1}(L,n-j+1)\le n-j+k-1$ further queries are enough. Altogether there were at most $n+k-1$ queries, finishing the proof of the upper bound..

We continue with the Adversary's strategy. We say that an element is \textit{out} if it was both smaller and larger in queries.
During the first part of the algorithm, the Adversary answers such a way that no element goes out, whenever he can. In this case he also changes the input the following way. Assume $A$ is the set of elements that are out, $B$ is the set of elements that are not out and were smaller in a query, and $C$ is the set of elements that are not out and were larger in a query (and there is a set $D$ of $n-|A|-|B|-|C|$ elements that have not appeared in a query). Then in the new input the elements of $B$ are the smallest (ordered arbitrarily), then come the elements of $A$, then the elements of $D$ (ordered arbitrarily), and finally the elements of $C$ (ordered arbitrarily). Note that any such ordering is consistent with the answers so far, except for the ordering inside $A$, where the Adversary chooses an arbitrary ordering consistent with the answers.

If during this part the Adversary has to answer such a way that an element is out (because two elements of $B$ or two elements of $C$ are queried), then he answers without changing the input. This part ends after $\min\{k,\lceil n/2\rceil\}$ answers where no element goes out (thus after at most $k$ changes). In the second part, he does not change the input anymore. 

Observe that the Adversary does not change the input with his very first answer, and no element goes out after the first answer.
Thus there are at least $1+\min\{k,\lceil n/2\rceil\}+|A|$ queries in the first part. In the second part, all but one element of $B$ has to be larger in a query, and all but one element of $C$ has to be smaller in a query. These all require distinct queries, as the elements of $B$ are smaller than the elements of $C$ in the current input. Thus we have at least $1+\min\{k,\lceil n/2\rceil\}+|A|+|B|-1+|C|-1=\min\{n+k-1,\lceil 3n/2\rceil -2\}$ queries.
\end{proof}

\section{Connected graphs}

The input is an unknown graph $G$ on $n$ vertices, 
the allowed queries correspond to pairs of vertices $u,v$, and the answer tells if $uv$ is an edge of $G$ or not. Let $C$ denote the problem of finding out if $G$ is connected or not.

It is well-known (see for example \cite{ly}) that $D(C,n)=\binom{n}{2}$, i.e. the connectedness is an \textit{evasive} graph property. In other words, Questioner has to ask all the pairs and completely identify the graph $G$ in order to find out if it is connected. A simple strategy of the Adversary that shows this is the following. He answers NO to every query unless it would make the graph unconnected, i.e. he changes only that one edge of the input. This shows $D_{n-2}(C,n)=\binom{n}{2}$, as there are always $n-1$ YES answers in this strategy, and the last one arrives after the last query (thus the Adversary can answer NO instead of YES for that query, avoiding the last change).

On the other hand, $D_0(C,n)=\lfloor n^2/4\rfloor$. Indeed, the Adversary can choose a graph $G$ consisting of two components of size $\lfloor n/2\rfloor$ and $\lceil n/2\rceil$. Then $G$ is unconnected, but to show this, one has to ask all the possible edges between the two components.

Let $T(n,k)$ denote the Tur\'an graph, i.e. the $k$-partite complete balanced graph on $n$ vertices, and $t(n,k)$ denote the number of its edges. Here balanced means every part has size $\lfloor n/k\rfloor$ or $\lceil n/k\rceil$. Note that this graph has to most number of edges among $k$-partite graphs.

\begin{theorem}\label{connec}
$t(n,k+2)\le D_k(C,n)\le t(n,k+2)+n-1$.
\end{theorem}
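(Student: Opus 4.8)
My plan is to prove the two bounds separately, with the lower bound coming from an explicit Adversary strategy and the upper bound from a Questioner strategy combined with an induction on $k$.

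For the \textbf{lower bound} $D_k(C,n)\ge t(n,k+2)$, I would have the Adversary maintain, at all times, a complete $(k+2)$-partite structure as his "intended" graph: he starts with a partition of the $n$ vertices into $k+2$ balanced classes, answers NO exactly when a queried pair lies inside one class and YES when it crosses two classes (so the intended graph is always $T(n,k+2)$, which is connected since $k+2\ge 2$, hence consistent with "connected"). The point is that this is never forced to end early: as long as some cross pair is unasked, the graph could still be $T(n,k+2)$ (connected), but it could also be disconnected — we cannot yet be sure. When the Questioner is about to pin things down, the Adversary spends one of his $k$ changes to re-partition: he keeps all previously-answered NO pairs inside classes and all previously-answered YES pairs crossing classes, and with $k+2$ classes he has enough freedom to do this because at each change he only needs to split off the contents of the part that is getting "completed" (this is the same flavour of re-balancing argument used in Proposition \ref{sorti} and in Theorem~\ref{connec}'s predecessor theorems). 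After $k$ changes he is left with an honest $2$-partite target (a disconnected graph with two parts) whose non-edges must all be verified, forcing a total of at least $t(n,k+2)$ queries — the $t(n,k+2)$ count being exactly the edge set that is never "wasted" across the whole play.

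For the \textbf{upper bound} $D_k(C,n)\le t(n,k+2)+n-1$, I would argue by induction on $k$. The base case $k=0$ is $D_0(C,n)\le t(n,2)+n-1 = \lfloor n^2/4\rfloor + n-1$, which follows from the discussion before the theorem (in fact $D_0(C,n)=\lfloor n^2/4\rfloor$, so we have room to spare). For the inductive step, the Questioner asks the $\binom{n}{2}-t(n,k+2)$ "within-part" pairs of a fixed balanced $(k+2)$-partition first — wait, that is too many; instead he should ask only enough pairs to start building a spanning structure and force a change. The cleaner approach: the Questioner plays the optimal $0$-change (non-deterministic) strategy for whatever the current input is, namely asking the $\le \lfloor n^2/4\rfloor$ cross-edges of a bipartition certifying disconnectedness, or a spanning tree's worth of edges certifying connectedness; if the Adversary never changes, this costs at most $\max\{\lfloor n^2/4\rfloor, n-1\}=\lfloor n^2/4\rfloor\le t(n,k+2)+n-1$ queries and we are done. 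If the Adversary does change after some $j$ queries, those $j$ queries are now "sunk", but each change forces at least one previously-unasked pair to have been answered in a way incompatible with the Questioner resuming blindly, so one bounds the residual problem by $D_{k-1}$ of a smaller instance and applies Proposition \ref{altalanos}(ii) together with the inductive hypothesis to telescope the bound to $t(n,k+2)+n-1$; the $n-1$ slack is what absorbs the final spanning-tree verification, and the geometric-like growth of $t(n,\cdot)$ is what makes the sum of the $t$-terms over the rounds come out to $t(n,k+2)$ rather than something larger.

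The \textbf{main obstacle} is the lower-bound re-partitioning argument: one must check that after each of the $k$ changes the Adversary really can find a fresh balanced $(k+2-i)$-partition (for the $i$-th change) keeping all answered NO-pairs within classes and all answered YES-pairs across classes, \emph{and} that the count of cross-pairs available but not yet asked is still large enough to prevent an early finish; getting the balancing right (so the edge counts add up to exactly $t(n,k+2)$, matching the Turán number) is the delicate part, analogous to — but trickier than — the component-merging bookkeeping in Proposition~\ref{sorti}. I expect the upper-bound induction to be comparatively routine once the right residual-instance size is identified.
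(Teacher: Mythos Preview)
Your lower-bound strategy has the edge/non-edge roles reversed. If the Adversary's intended graph is $T(n,k+2)$ itself (YES on cross pairs, NO inside classes), that graph is \emph{connected}, and the Questioner --- who knows the current input --- will simply query $n-1$ edges of a spanning tree; after $n-1$ YES answers connectivity is certified and the game ends, long before $t(n,k+2)$ queries. Re-partitioning into another complete multipartite graph does not help, since any such graph is still connected and still admits an $(n-1)$-edge certificate. (Your closing remark that the final ``$2$-partite target'' is ``a disconnected graph with two parts whose non-edges must all be verified'' reveals the confusion: a complete bipartite graph is connected.) The paper's Adversary uses the \emph{complement} of $T(n,k+2)$, namely $k+2$ disjoint balanced cliques, a disconnected graph. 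Intra-part edges are announced for free; each cross query is answered NO unless that NO would complete an empty cut, in which case the Adversary changes that single non-edge to an edge. Each such change merges two components, so after at most $k$ changes at least two components remain. Now if some cross pair $uv$ were never queried, one argues that $u$ and $v$ end in different components, so adding the edge $uv$ yields a connected graph consistent with all answers --- a contradiction. Hence all $t(n,k+2)$ cross pairs must be asked, and no re-partitioning is needed at all.

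Your upper-bound sketch also does not close: running the $0$-change certificate and restarting after each change yields only $(k+1)\lfloor n^2/4\rfloor$ via Proposition~\ref{altalanos}(ii), which far exceeds $t(n,k+2)+n-1$, and the instance does not shrink in $n$ after a change, so there is no ``smaller instance'' to induct on. The paper instead has the Questioner first build a spanning forest of the current input (each NO answer during this phase forces, and consumes, one change), and then proves a separate Claim that once a spanning forest is in hand with $i$ changes remaining, at most $t(n,i+2)$ further queries suffice; the arithmetic $(n-1)+(k-i)+t(n,i+2)\le t(n,k+2)+n-1$ (using $t(n,j)<t(n,j+1)$ for $j<n$) then gives the bound.
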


\begin{proof}
Let us start with the Adversary's strategy. He chooses as the original input the complement $G$ of the Tur\'an graph $T(n,k+2)$. Let $A_1,\dots,A_{k+2}$ be the parts in the Tur\'an graph; they are cliques in $G$. He always answers that vertices in the same $A_i$ are connected; in fact he gives this information for free, and in the followings we consider those edges to be known. The Adversary changes the input whenever the NO answer would make the graph unconnected (and would finish the algorithm). In that situation he changes the current non-edge to an edge. This way the graph will be not connected at the end, but the last change makes the graph have only two components. Assume an edge $uv$ of the Tur\'an graph has not been asked. Then at the end of the algorithm $u$ and $v$ cannot be in the same component. Indeed, in that case there was a query $u'v'$ at one point that connected two components, one of them having $u$ and the other having $v$. But then the answer would have been no to that query, as it does not make the graph unconnected. Thus $u$ and $v$ are in different components, but then the graph could be connected, a contradiction.

For the upper bound, we apply induction on $k$, the base step $k=0$ is described before the theorem is stated. Note that if $k\ge n-2$, the upper bound is larger than $\binom{n}{2}$, thus the statement is trivial. Hence from now on we assume $k\le n-3$.

Let us describe the Questioner's strategy. First, he asks the edges of a spanning forest consisting of spanning trees in each component. Whenever the current graph does not have such a spanning forest of edges already asked, he picks the smallest number of edges that would form a spanning forest with some of the edges we know are in the graph. Whenever the current graph has such a spanning forest, he picks the smallest cut and asks its edges. 

\begin{claim}\label{clma} After finding a spanning forest (of the graph $G'$ that is the input graph at that point), if there are $i$ changes left, at most $t(n,i+2)$ queries are needed.

\end{claim}

\begin{proof}[Proof of Claim] Observe first that each of the connected components of $G'$ contains a spanning tree. If there are at most $i+2$ connected components of $G'$, then (as it is pointless to ask edges with both endpoints in the same component) the number of queries we need to ask cannot be more than the number of edges in the complement graph of $G'$, which is at most $t(n,i+2)$.

If there are more than $i+2$ components of $G'$, let $a_1,\dots,a_{i+1}$ be the size of the components we try to separate from the remaining vertices, and let $a_{i+2}=n-\sum_{j=1}^{i+1} a_j$. Then the number of queries we ask after finding the spanning tree is at most $\prod_{1\le j<l\le i+2}a_ja_l$, which is the number of edges in the complete $(i+2)$-partite graph with parts of size $a_j$, thus at most $t(n,i+2)$.
\end{proof}

Let us return to the proof of the theorem. If there is no change while finding the spanning forest, then the above claim finishes the proof.

If there is a change during the queries corresponding to the very first spanning forest, then we continue till we find a spanning forest, by asking edges that are present in the current input graph, but are not inside a connected component of the known edges. Observe that after at most $n-1$ YES answers, we have a spanning forest in the current graph. If we had $k-i$ NO answers during that, that means $k-i$ changes, thus at most $t(n,i+2)$ queries are needed afterwards, by Claim \ref{clma}. This means there are $n-1+k-i+t(n,i+2)$ queries altogether. Observe that $t(n,i)<t(n,i+1)$, whenever $i<n$. This implies $k-i+t(n,i+2)\le t(n,k+2)$, whenever $k\le n-2$, finishing the proof. 
\end{proof}

\section{Concluding remarks}

We have defined a quantity that is between the deterministic and non-deterministic query complexity. We studied it for a couple examples, but there are countless many other, equally interesting questions. Also, we were unable to completely determine $D_k(P,n)$ even for most of the problems we considered.

Here we describe some potential variants of the model we have studied.

In the graph case, or more generally if the input is disjoint union of the possible queries (for example in case of any kind of Boolean functions), one could restrict the change to the query being asked. Note that in fact this is how the Adversary changes the input graph in Theorem \ref{connec} anyway. For this model, we can consider the following motivation: 
Questioner has some outdated information, for example a map, and some things have changed since that was published. However, it is a reasonable assumption that there was a limited number of changes. Also the changes are independent, i.e. when we learn our information was incorrect at one place, we have no reason to assume any particular other change (compare this to the bigger paradigm shifts in one of the motivations of the model studied in this paper).

The \emph{changes} in our study are in some sense opposite to \textit{lies} in other studies. There the Adversary has a different option: he can give arbitrary false answers at most $l$ times. This is in addition to his ability to change the input as many times as he wants it. Still, it would be interesting to combine these abilities and study an Adversary who can change the input $k$ times and lie $l$ times. In this case we would not assume the Questioner knows everything.

There are other models of lies, when a fixed proportion of the answers can be false, or when every answer is false with a probability $p$, independently from each other. Similarly, we could modify our model and allow changes proportional to the number of queries, or define a distribution on the possible input graphs at every point, and choose one randomly.

We could also combine changes and lies in a more complicated way. For example, the Adversary could tell that an earlier answer of his was a lie, because he changed the input at that point.

\bigskip
\textbf{Funding}: Research supported by the National Research, Development and Innovation Office - NKFIH under the grants SNN 129364, KH 130371 and K 116769 and by the J\'anos Bolyai Research Fellowship of the Hungarian Academy of Sciences.

\end{document}